\newcommand{\p}[1]{\mathop{\mbox{\it p} } }
\renewcommand{\vec}[1]{\ensuremath{\boldsymbol{#1}}}
\newcommand{\be}{\begin{equation}}
\newcommand{\ee}{\end{equation}}
\newcommand{\ba}{\begin{array}}
\newcommand{\ea}{\end{array}}
\newcommand{\bea}{\begin{eqnarray}}
\newcommand{\eea}{\end{eqnarray}}
\newcommand{\bean}{\begin{eqnarray*}}
\newcommand{\eean}{\end{eqnarray*}}
\newcommand{\diag}{\mathop{\rm diag}}
\newcommand{\rmt}{^{\rm T}}
\definecolor{white}{rgb}{1,1,1}
\newtheorem{theorem}{Theorem}
\newtheorem{lemma}{Lemma}
\newtheorem{example}{Example}
\newtheorem{remark}{Remark}
\begin{document}

\title{An Identity of Hankel Matrices Generated from the Moments of Gaussian Distribution}

\author{\IEEEauthorblockN{ Sha Hu} \\
\IEEEauthorblockA{{Lund Research Center\\ Huawei Technologies Sweden AB, Lund, Sweden}\\
hu.sha@huawei.com}

}

\maketitle

\begin{abstract}
In this letter, we proved a matrix identity of Hankel matrices that seems unrevealed before, generated from the moments of Gaussian distributions. In particular, we derived the Cholesky decompositions of the Hankel matrices in closed-forms, and showed some interesting connections between them. The results have potential applications in such as optimizing a nonlinear (NL) distortion function that maximizes the receiving gain in wireless communication systems.
\end{abstract}

\begin{IEEEkeywords}
Gaussian distribution, moments, Hankel matrix, double fractional, Cholesky decomposition.
\end{IEEEkeywords}

\section{Origin of the Problem}

Maximizing the receiving gain with a maximum-likelihood (ML) receiver in a nonlinear (NL) transmission system, leads to the seeking of an optimal distortion function. In~\cite{Rui2023}, the gain to be maximized with a distortion function $f(s)$, is defined as
\bea \label{G} G=\frac{\int_{-\infty}^{\infty}(f'(s))^2 p(s)\mathrm{d}s}{\int_{-\infty}^{\infty}f^2(s)p(s)\mathrm{d}s},\eea
where $p(s)$ is the probability distribution function (pdf) of an real-valued input signal $s$, with an average power $\sigma^2\!=\!E(s^2)$. Without loss of generality, we let $f(s)$ be an odd and polynomial function with $f(-s)\!=\!-f(s)$. Note that in~\cite{Rui2023} $G$ is defined for $\sigma\!=\!1$, while for general cases when the input and ouput signals are the same with $f(s)\!=\!s$, the gain equals $G\!=\!\sigma^{-2}$ and is not normalized. Nevertheless, we use the definition in (\ref{G}) for analysis.

The first-order derivative of $f(s)$ is thusly an even function and can be represented as
\bea \label{f's} f'(s)=\vec{a}\vec{D}\vec{z}\rmt, \eea
where $N\!=\!2M\!-\!1$ is the order of $f(s)$, and
\bea \vec{a}&\!\!\!=\!\!\!&(a_0, a_1, \cdots, a_{N-1}),  \\ 
 \label{z} \vec{z}&\!\!\!=\!\!\!&(1, s^2, \cdots,s^{N-1}), \\
 \label{d}  \vec{D}&\!\!\!=\!\!\!&\mathrm{diag}\!\left(1, 3, \cdots, N\right)\!. \eea
Here $\mathrm{diag}(\cdot)$ denotes the diagonal by puting the entries on its main diagonal. Hence, the function $f(s)$ reads
\bea \label{fs} f(s)=s\vec{a}\vec{z}\rmt.\eea
Note that if $\vec{a}$ is a solution that maximizes $G$ in (\ref{G}), so is any scaling of $\vec{a}$, and hence we also constrain $a_{N-1}\!=\!1$. 


Now, substituting (\ref{f's}) and  (\ref{fs}) into (\ref{G}) yields a classical Rayleigh quotient maximization with
\bea \label{newG} G=\frac{\vec{a}\vec{D}\left(\int_{-\infty}^{\infty}(\vec{z}\rmt\vec{z})p(s)\mathrm{d}s \right)\vec{D}\vec{a}\rmt}{\vec{a}\left(\int_{-\infty}^{\infty}(\vec{z}\rmt\vec{z})s^2p(s)\mathrm{d}s \right)\vec{a}\rmt}=\frac{\vec{a}\vec{D}\vec{A}\vec{D}\vec{a}\rmt}{\vec{a}\vec{B}\vec{a}\rmt},\eea
where the matrices $\vec{A}$ and $\vec{B}$ are Hankel matrices generated from the even moments of $s$, and equal to
\bea  \label{A}\vec{A}=\!\int_{-\infty}^{\infty}(\vec{z}\rmt\vec{z})p(s)\mathrm{d}s  
\!\!\!&=&\!\!\!\!\left(
\begin{matrix}
  1       & E(s^2) & E(s^4) & \cdots & E(s^{N-1}) \\
    E(s^2)      &E(s^4)  &E(s^6) & \cdots & E(s^{N+1}) \\
   \vdots& \iddots& \iddots& \iddots& \vdots \\
   E(s^{N-1})     &  E(s^{N+2})  & E(s^{N+3}) & \cdots &  E(s^{2(N-1)})
\end{matrix} \right)\!,  \\ 
\label{B} \vec{B}=\!\int_{-\infty}^{\infty}s^2(\vec{z}\rmt\vec{z})p(s)\mathrm{d}s  
\!\!\!&=&\!\!\!\!\left(
\;\begin{matrix}
  E(s^2)      & E(s^4) & E(s^6) & \cdots & E(s^{N+1}) \\
    E(s^4)      &E(s^6)  &E(s^8) & \cdots & E(s^{N+3}) \\
   \vdots& \iddots& \iddots& \iddots& \vdots \\
   E(s^{N+1})     &  E(s^{N+3})  & E(s^{N+5}) & \cdots &  E(s^{2N}) 
\end{matrix} \;\, \right)\!.  
\eea

Note that $\vec{B}$ can be obtained by shifting $\vec{A}$ by one row or one column, and by definitions both $\vec{A}$ and  $\vec{B}$ are positive semi-definite, and can therefore be factorized. Denoting the Cholesky decomposition of $\vec{B}$ as
\bea \label{chol} \vec{B}=\vec{L}\vec{L}\rmt,\eea
and letting
\bea  \label{al} \vec{x}=\vec{a}\vec{L},\eea
the Rayleigh quotient can be transformed into
\bea \label{newG1} G=\frac{\vec{x}\vec{L}^{-1}\vec{D}\vec{A}\vec{D}\vec{L}^{-T}\vec{x}\rmt}{\vec{x}\vec{x}\rmt}=\frac{\vec{x}\vec{C}\vec{x}\rmt}{\vec{x}\vec{x}\rmt},\eea
where
\bea \vec{C}=\vec{L}^{-1}\vec{D}\vec{A}\vec{D}\vec{L}^{-T}. \eea
Hence, the maximal gain $G$ equals the largest eigenvalue of $\vec{C}$, and the optimal $\vec{x}$ is the correspondent eigenvector.

\section{Main Results}

Although the problem can be general, our main results are derived with a Gaussian distribution $s\!\sim\!\mathcal{N}(0, \sigma^2)$ and $p(s)\!=\!\frac{1}{\sqrt{2\pi}\sigma}e^{\frac{-s^2}{2\sigma^2}}$. In which case, the moments are
\bea  E(s^n) =\left\{\begin{matrix} 0,& \text{if $n$ is odd}, \\ (n-1)!!\sigma^n,& \text{if $n$ is even}. \end{matrix} \right. \eea
The notation $n!!$ denotes the double factorial of $n$. 

Before proceeding further, we introduce some basics of Hermite polynomials~\cite{Hp}. The $n$th Hermite polynomial $H_n(s)$ is defined as
\bea \label{Hn}H_n(s)=(-1)^n e^{\frac{s^2}{2}}\frac{\mathrm{d}^n}{\mathrm{d} s^n}e^{-\frac{s^2}{2}}.\eea
For examples, the first six Hermite polynomials are
\bea \label{hpol}
  H_0(s)&\!\!\!=\!\!\!&1, \notag \\
    H_1(s)&\!\!\!=\!\!\!&s, \notag \\
      H_2(s)&\!\!\!=\!\!\!&s^2-1, \notag \\
        H_3(s)&\!\!\!=\!\!\!&s^3-3s, \notag \\
          H_4(s)&\!\!\!=\!\!\!&s^4-6s^2+3, \notag \\
            H_5(s)&\!\!\!=\!\!\!&s^5-10s^3+15s.
\eea 
The Hermite polynomials are orthogonal to each other, and with $s\!\sim\!\mathcal{N}(0, 1)$ it holds that
\bea \label{hnint}  
  \int_{-\infty}^{\infty}   H_n(s)  H_m(s)p(s)=n!\delta(n-m),\eea
where $n!$ is the factorial of $n$, and $\delta(n-m)$ is the Dirac delta-function with $\delta(0)\!=\!1$ and $\delta(m)\!=\!0$ for $m\!\neq\!0$. Further, the polynomials also follow the recursive formula
\bea \label{hndev} H'_n(s) =n H_{n-1}(s). \eea

We show an example with $N\!=\!5$ and $\sigma\!=\!1$.

\begin{example}
For $N\!=\!5$ ($M\!=\!3$), the matrices are equal to
\bea \vec{A}=\left(
\begin{matrix}
  1       & 1& 3 \\
1   &3 & 15 \\
3   &15 &  105
\end{matrix} \right)\!,  \quad
\vec{B}= \left(
\begin{matrix}
1     &3 & 15\\
3   &15 & 105 \\
15   &105  &  945\\
\end{matrix} \right)\!. 
\eea
It can be numerically shown that $\vec{C}\!=\vec{L}^{-1}\vec{D}\vec{A}\vec{D}\vec{L}^{-T}\!=\!\vec{D}\!=\!\mathrm{diag}(1, 3, 5)$. Hence, the maximum gain is $G\!=\!N\!=\!5$, and the optimal $\vec{x}\!=\![0, 0, 1]$ which yields $\vec{a}\!=\![15,  -10,  1]$. The optimal NL function is $f(s)\!=\!s^5\!-\!10s^3\!+\!15s$, which is identical to $H_5(s)$.
\end{example}

An interesting observation from this example is the identity ``$\vec{L}^{-1}\vec{D}\vec{A}\vec{D}\vec{L}^{-T}\!=\!\vec{D}$'' of Hankel matrices $\vec{A}$ and $\vec{B}$, which turns out to hold in general by further considering the case $\sigma\!\neq\!´1$, and is stated in the main theorem below.

\begin{theorem} 
The Hankel matrices $\vec{A}$ and $\vec{B}$, both of sizes $M\!\times\!M$ and generated from the even moments of a Gaussian distribution $\mathcal{N}(0, \sigma^2)$ as in (\ref{A}) and (\ref{B}), respectively, are equal to
\bea  \label{A2}\vec{A}\!\!\!\!&\!\!=\!\!&\!\!\! \vec{D}_\sigma\vec{A}_0 \vec{D}_\sigma, \\
\label{B2} \vec{B}\!\!\!\!&\!\!=\!\!\!\!&\!\!\! \sigma^2\vec{D}_\sigma\vec{B}_0 \vec{D}_\sigma,\eea
where the diagonal matrix $\vec{D}_\sigma$ is
\bea \vec{D}_\sigma\!=\!\diag(1, \sigma^2, \sigma^4, \cdots, \sigma^{N-1}). \eea
The Hankel matrices $\vec{A}_0$ and $\vec{B}_0$ are generated with the case $\sigma\!=\!1$ from (\ref{A}) and (\ref{B}), respectively, and equal to
\bea  \label{A1}\vec{A}_0\!\!&\!\!=\!\!&\! \left(
 \begin{matrix} 
  1       &1 & 3 & \cdots &(N-2)!! \\
    1      &3  &15 & \cdots &N!! \\
   \vdots& \iddots& \iddots& \iddots& \vdots \\
     (N-4)!!     &    (N-2)!!   &  N!!  & \cdots &   (2N-5)!! \\
  (N-2)!!     &    N!!   &  (N+2)!!  & \cdots &   (2N-3)!! 
\end{matrix} \right)\!\!, \notag \\   
 \eea
\bea
\label{B1} \vec{B}_0\!\!&\!\!=\!\!\!\!&\!\left(
 \begin{matrix}
   1      &3  &15 & \cdots &N!! \\
    3     &15  &105 & \cdots &(N+2)!! \\
   \vdots& \iddots& \iddots& \iddots& \vdots \\
  (N-2)!!     &    N!!   &  (N+2)!!  & \cdots &   (2N-3)!! \\
    N!!     &    (N+2)!!   &  (N+4)!!  & \cdots &   (2N-1)!! 
\end{matrix}  \right)\!\!.\notag \\ 
\eea
Denoting the Cholesky decomposition of $\vec{B}$ as in (\ref{chol}), and with the diagonal matrix $\vec{D}$ in (\ref{d}), it holds that
\bea \vec{L}^{-1}\vec{D}\vec{A}\vec{D}\vec{L}^{-T}=\sigma^{-2}\vec{D}.\eea
Or equivalently,
\bea \vec{D}\vec{A}\vec{D}=\sigma^{-2} \vec{L}\vec{D}\vec{L}^{T}. \eea
\end{theorem}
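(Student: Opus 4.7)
The theorem splits cleanly into two parts: the scaling factorisations (\ref{A2})--(\ref{B2}) and the main Cholesky identity. The factorisations follow by direct substitution of $E(s^{2n}) = (2n-1)!!\,\sigma^{2n}$ into (\ref{A}) and (\ref{B}), noting that the $(i,j)$ entry carries the $\sigma$-factor $\sigma^{2(i+j)} = (\vec{D}_\sigma)_{ii}(\vec{D}_\sigma)_{jj}$ (with an extra $\sigma^2$ appearing in $\vec{B}$). Because $\vec{D}_\sigma$ commutes with $\vec{D}$, if $\vec{L}_0$ denotes the Cholesky factor of $\vec{B}_0$ then $\vec{L} := \sigma\,\vec{D}_\sigma \vec{L}_0$ is lower triangular with positive diagonal and $\vec{L}\vec{L}\rmt = \vec{B}$; by uniqueness of the Cholesky decomposition it is the Cholesky factor of $\vec{B}$, and $\vec{L}^{-1}\vec{D}\vec{A}\vec{D}\vec{L}^{-T} = \sigma^{-2}\,\vec{L}_0^{-1}\vec{D}\vec{A}_0\vec{D}\vec{L}_0^{-T}$. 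The whole theorem therefore reduces to the case $\sigma = 1$, namely $\vec{L}_0^{-1}\vec{D}\vec{A}_0\vec{D}\vec{L}_0^{-T} = \vec{D}$.

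\textbf{Diagonalisation via odd-index Hermite polynomials.} For $\sigma = 1$, I would expand the generic odd polynomial $f(s) = s\vec{a}\vec{z}\rmt$ of degree $\leq N$ in the basis of odd-index Hermite polynomials, writing $f(s) = \sum_{k=0}^{M-1} b_k H_{2k+1}(s)$. Since $H_{2k+1}(s) = s^{2k+1} + (\text{lower-degree odd monomials})$, the change-of-basis matrix $\vec{H}$ defined by $\vec{a} = \vec{b}\vec{H}$ is lower triangular with unit diagonal. The orthogonality relation (\ref{hnint}) gives $E[f^2(s)] = \sum_k b_k^2(2k+1)!$, which, combined with the identity $\vec{a}\vec{B}_0\vec{a}\rmt = E[f^2(s)]$, yields $\vec{H}\vec{B}_0\vec{H}\rmt = \diag(1!,\,3!,\,\ldots,\,N!)$. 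Applying the derivative recursion (\ref{hndev}) gives $f'(s) = \sum_k b_k(2k+1)H_{2k}(s)$, so (\ref{hnint}) again produces $E[(f'(s))^2] = \sum_k b_k^2(2k+1)^2(2k)! = \sum_k b_k^2(2k+1)\cdot(2k+1)!$, yielding $\vec{H}\vec{D}\vec{A}_0\vec{D}\vec{H}\rmt = \diag(1\cdot 1!,\,3\cdot 3!,\,\ldots,\,N\cdot N!)$.

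\textbf{Conclusion and main obstacle.} Setting $\vec{L}_0 := \vec{H}^{-1}\diag(\sqrt{1!},\,\sqrt{3!},\,\ldots,\,\sqrt{N!})$ produces a lower-triangular matrix with positive diagonal and $\vec{L}_0\vec{L}_0\rmt = \vec{H}^{-1}\diag(1!,\ldots,N!)\vec{H}^{-T} = \vec{B}_0$, so by uniqueness it \emph{is} the Cholesky factor of $\vec{B}_0$. Substituting into the target expression and using the second Hermite identity collapses $\vec{L}_0^{-1}\vec{D}\vec{A}_0\vec{D}\vec{L}_0^{-T}$ to $\diag(1,3,\ldots,N) = \vec{D}$, which together with the scaling argument of the first paragraph closes the proof. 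The main obstacle is the Hermite-basis computation of the two quadratic forms: the argument hinges on the algebraic coincidence $(2k+1)^2(2k)! = (2k+1)\cdot(2k+1)!$, which forces the same lower-triangular $\vec{H}$ to simultaneously diagonalise both $\vec{B}_0$ and $\vec{D}\vec{A}_0\vec{D}$ with diagonals in constant ratio $(2k+1)$. Everything else, including the $\sigma$-scaling and the Cholesky uniqueness, is routine bookkeeping.
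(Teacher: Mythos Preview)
Your proof is correct and follows essentially the same route as the paper: reduce to $\sigma=1$ via the diagonal scaling $\vec{D}_\sigma$, then use the odd-index Hermite polynomials (your $\vec{H}$ is the paper's $\vec{L}_b$) together with the orthogonality relation (\ref{hnint}) and the derivative recursion (\ref{hndev}) to diagonalise both $\vec{B}_0$ and $\vec{D}\vec{A}_0\vec{D}$ simultaneously. The only difference is organisational: the paper separates the argument into two lemmas---one giving $\vec{L}_a\vec{A}_0\vec{L}_a\rmt=\vec{D}_a$ and $\vec{L}_b\vec{B}_0\vec{L}_b\rmt=\vec{D}_b$ via even and odd Hermite bases, and one establishing the commutation $\vec{L}_b\vec{D}=\vec{D}\vec{L}_a$---whereas you bypass the explicit introduction of $\vec{L}_a$ by computing the quadratic form $E[(f')^2]$ directly in the odd Hermite basis, which folds the commutation relation into the single identity $(2k+1)^2(2k)!=(2k+1)\cdot(2k+1)!$.
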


Before proving the theorem, we first show two lemmas. Lemma~1 provides the Cholesky decompositions of $\vec{A}$ and $\vec{B}$ in closed-forms, based on the Hermite polynomials. Note that in general to obtain the Cholesky decomposition of a Hankel matrix is not easy~\cite{P71, BV88}. Lemma~2 shows an interesting commuting property between the Cholesky decompositions of $\vec{A}$ and $\vec{B}$, and one can be derived easily from the other.

With $\vec{z}$ in (\ref{z}), we can rewrite the Hermite polynomials as
\bea \label{Hodd} H_{2m-1}(s)&\!\!\!=\!\!\!&s\vec{\beta}_{2m-1,N}\vec{z}\rmt, \\
H_{2m-2}(s)&\!\!\!=\!\!\!&\vec{\beta}_{2m-2,N}\vec{z}\rmt,
\eea
where $\vec{\beta}_{m,N}$ are row vectors of length $M$ ($N\!=\!2M\!-\!1$), with the last non-zero entry equal to 1. For instances, it holds from (\ref{hpol}) that
\bea 
\vec{\beta}_{0,N}&\!\!\!=\!\!\!&(1,0,\cdots,0), \notag \\
 \vec{\beta}_{1,N}&\!\!\!=\!\!\!&(1,0,\cdots,0), \notag \\
 \vec{\beta}_{2,N}&\!\!\!=\!\!\!&(-1,1,0\cdots,0), \notag \\
\vec{\beta}_{3,N}&\!\!\!=\!\!\!&(-3,1,0\cdots,0), \notag \\
\vec{\beta}_{4,N}&\!\!\!=\!\!\!&(3, -6, 1, 0,\cdots,0),\notag \\
\vec{\beta}_{5,N}&\!\!\!=\!\!\!&(15, -10, 1, 0,\cdots,0).
\eea 

\begin{lemma} 
Denoting two lower-triangular matrices,
\bea \label{Lb}
\vec{L}_b\!\!&\!\!\!=\!\!\!&\!\!\!\left( \vec{\beta}_{1,N}\rmt, \vec{\beta}_{3,N}\rmt,\cdots,\vec{\beta}\rmt_{N,N} \!\right)\rmt, \\
\vec{L}_a\!\!&\!\!\!=\!\!\!&\!\!\!\left( \vec{\beta}_{0,N}\rmt, \vec{\beta}_{2,N}\rmt,\cdots,\vec{\beta}\rmt_{N-1,N}\! \right)\rmt , \\
\eea
it holds that
\bea 
\label{eq1} \vec{L}_b\vec{D}_\sigma ^{-1}\vec{B}\vec{D}_\sigma ^{-1}\vec{L}\rmt_b&\!\!\!=\!\!\!&\sigma^2\vec{D}_b, \\
\label{eq2}\vec{L}_a\vec{D}_\sigma ^{-1}\vec{A}\vec{D}_\sigma ^{-1}\vec{L}\rmt_a&\!\!\!=\!\!\!&\vec{D}_a,
\eea
where the diagonal matrices are
\bea 
\vec{D}_a&\!\!\!=\!\!\!&\diag(1, 2!, 4!, \cdots, (N-1)!), \\
\vec{D}_b&\!\!\!=\!\!\!&\diag(1, 3!, 5!, \cdots, N!) \notag \\
&\!\!\!=\!\!\!&\vec{D}\vec{D}_a. \eea
\end{lemma}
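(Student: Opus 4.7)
The plan is to reduce everything to the standard $\sigma=1$ case and then recognize the middle matrices $\vec{D}_\sigma^{-1}\vec{A}\vec{D}_\sigma^{-1}$ and $\vec{D}_\sigma^{-1}\vec{B}\vec{D}_\sigma^{-1}$ as moment matrices of $\vec{z}(t)^{\rm T}\vec{z}(t)$ and $t^2\vec{z}(t)^{\rm T}\vec{z}(t)$ for a standard Gaussian $t$. Since $\vec{z}(s)=(1,s^2,\ldots,s^{N-1})=\vec{z}(t)\vec{D}_\sigma$ under the substitution $s=\sigma t$, the changes of variables immediately give $\vec{D}_\sigma^{-1}\vec{A}\vec{D}_\sigma^{-1}=\vec{A}_0$ and $\vec{D}_\sigma^{-1}\vec{B}\vec{D}_\sigma^{-1}=\sigma^2\vec{B}_0$. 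Thus (\ref{eq1}) and (\ref{eq2}) reduce to showing $\vec{L}_b\vec{B}_0\vec{L}_b^{\rm T}=\vec{D}_b$ and $\vec{L}_a\vec{A}_0\vec{L}_a^{\rm T}=\vec{D}_a$.

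The central observation is that the rows of $\vec{L}_a$ and $\vec{L}_b$ are precisely the coefficient vectors that convert $\vec{z}$ (or $s\vec{z}$) into Hermite polynomials. Explicitly, by the definitions (\ref{Hodd}) of $\vec{\beta}_{m,N}$ and the construction of $\vec{L}_a,\vec{L}_b$, I would write
\bea
\vec{L}_a \vec{z}(t)\rmt &=& \bigl(H_0(t), H_2(t), \ldots, H_{N-1}(t)\bigr)\rmt, \notag \\
\vec{L}_b\, t\,\vec{z}(t)\rmt &=& \bigl(H_1(t), H_3(t), \ldots, H_{N}(t)\rmt\bigr)\rmt. \notag
\eea
Substituting these into the integral representations
\bea
\vec{L}_a\vec{A}_0\vec{L}_a^{\rm T} &=& \int_{-\infty}^{\infty}\!\bigl(\vec{L}_a\vec{z}\rmt\bigr)\bigl(\vec{z}\vec{L}_a^{\rm T}\bigr)p_0(t)\,{\rm d}t, \notag \\
\vec{L}_b\vec{B}_0\vec{L}_b^{\rm T} &=& \int_{-\infty}^{\infty}\!\bigl(\vec{L}_b\, t\vec{z}\rmt\bigr)\bigl(t\vec{z}\vec{L}_b^{\rm T}\bigr)p_0(t)\,{\rm d}t, \notag
\eea
and applying the orthogonality relation (\ref{hnint}) entry by entry yields $(\vec{L}_a\vec{A}_0\vec{L}_a^{\rm T})_{ij}=(2i-2)!\,\delta(i-j)$ and $(\vec{L}_b\vec{B}_0\vec{L}_b^{\rm T})_{ij}=(2i-1)!\,\delta(i-j)$, which are exactly $\vec{D}_a$ and $\vec{D}_b$.

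The only step that needs care is justifying the identity $\vec{L}_a\vec{z}(t)^{\rm T}=(H_0(t),H_2(t),\ldots,H_{N-1}(t))^{\rm T}$, and likewise for $\vec{L}_b$. This follows directly from how the rows $\vec{\beta}_{m,N}$ were defined in (\ref{Hodd}): the $i$th row of $\vec{L}_a$ is $\vec{\beta}_{2i-2,N}$, so $(\vec{L}_a\vec{z}^{\rm T})_i=\vec{\beta}_{2i-2,N}\vec{z}^{\rm T}=H_{2i-2}(t)$; similarly $(\vec{L}_b t\vec{z}^{\rm T})_i=t\vec{\beta}_{2i-1,N}\vec{z}^{\rm T}=H_{2i-1}(t)$. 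The lower-triangularity of $\vec{L}_a$ and $\vec{L}_b$ and the fact that each has unit diagonal (from the leading coefficient of each Hermite polynomial being $1$, together with the stated convention that the last nonzero entry of $\vec{\beta}_{m,N}$ equals $1$) make this a genuine Cholesky factorization.

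I do not anticipate a real obstacle here; the argument is essentially a repackaging of Hermite orthogonality once the change of variable $s=\sigma t$ is performed. The mildly subtle bookkeeping will be tracking the $\sigma$-scaling and verifying that the dimensions line up ($\vec{L}_b$ uses odd-indexed Hermite polynomials up to $H_N$, which matches the $M\times M$ size since $N=2M-1$), but neither step requires nontrivial calculation.
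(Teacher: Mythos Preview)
Your proposal is correct and follows essentially the same route as the paper: reduce to the $\sigma=1$ case via (\ref{A2})--(\ref{B2}) (equivalently your change of variables $s=\sigma t$), identify $\vec{L}_a\vec{z}\rmt$ and $t\vec{L}_b\vec{z}\rmt$ with the vectors of even and odd Hermite polynomials, and invoke the orthogonality relation (\ref{hnint}) to obtain the diagonal matrices $\vec{D}_a$ and $\vec{D}_b$. The only addition you make---noting that unit diagonals of $\vec{L}_a,\vec{L}_b$ make this a bona fide Cholesky factorization---is a welcome clarification the paper leaves implicit.
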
 

\begin{proof} 
From (\ref{A2}) and (\ref{B2}), it holds that
\bea \label{eq3}
\sigma^{-2}\vec{L}_b\vec{D}_\sigma ^{-1}\vec{B}\vec{D}_\sigma^{-1}\vec{L}\rmt_b \!\!\!&=&\!\!\!\vec{L}_b\vec{B}_0\vec{L}\rmt_b, \\
\label{eq4}  \vec{L}_a\vec{D}_\sigma ^{-1}\vec{A}\vec{D}_\sigma^{-1}\vec{L}\rmt_a\!\!\!&=&\!\!\!\vec{L}_a\vec{A}_0\vec{L}\rmt_a.
 \eea
Assuming $s\!\sim\!\mathcal{N}(0, 1)$, it holds that
\bea \label{eq5}
\vec{L}_b\vec{B}_0\vec{L}\rmt_b\!\!\!\!&=&\!\!\!\!\vec{L}_b\left(\int_{-\infty}^{\infty}(\vec{z}\rmt\vec{z})s^2p(s)\mathrm{d}s\right) \vec{L}\rmt_b \notag \\
\!\!\!\!&=&\!\!\!\!\int_{-\infty}^{\infty}(s\vec{L}_b\vec{z}\rmt)(s\vec{L}_b\vec{z}\rmt)\rmt p(s)\mathrm{d}s  \notag \\
\!\!\!\!&=&\!\!\!\!\int_{-\infty}^{\infty}\big(H_1(s), H_3(s),\cdots,H_{N}(s)\big)\rmt  \big(H_1(s), H_3(s),\cdots,H_{N}(s)\big) p(s)\mathrm{d}s 
\notag \\
\!\!\!\!&\!\!\!=\!\!\!&\!\!\!\vec{D}_b. 
\eea
The last equality holds from (\ref{hnint}). Similarly, it also holds that
\bea \label{eq6}
\vec{L}_a\vec{A}_0\vec{L}\rmt_a\!\!\!&\!\!\!=\!\!\!&\!\!\!\vec{L}_a\left(\int_{-\infty}^{\infty}(\vec{z}\rmt\vec{z})p(s)\mathrm{d}s \right)\vec{L}\rmt_a \notag \\
\!\!\!&\!\!\!=\!\!\!&\!\!\!\int_{-\infty}^{\infty}(\vec{L}_a\vec{z}\rmt)(\vec{L}_a\vec{z}\rmt)\rmt p(s)\mathrm{d}s  \notag \\
\!\!\!&\!\!\!=\!\!\!&\!\!\!\int_{-\infty}^{\infty}\big(H_0(s), H_2(s),\cdots,H_{N-1}(s)\big)\rmt   \big(H_0(s), H_2(s),\cdots,H_{N-1}(s)\big) p(s)\mathrm{d}s 
\notag \\
\!\!\!&\!\!\!=\!\!\!&\!\!\!\vec{D}_a. \eea
Hence, (\ref{eq1}) and (\ref{eq2}) follow from (\ref{eq3})-(\ref{eq6}).
\end{proof} 

\begin{lemma}
The following commuting property holds,
\bea \label{Lab} \vec{L}_b \vec{D}=\vec{D}\vec{L}_a. \eea
\end{lemma}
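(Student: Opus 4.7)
The plan is to reduce the matrix identity (\ref{Lab}) to a scalar identity between the coefficients of consecutive Hermite polynomials, which will then follow immediately from the derivative recursion $H_n'(s) = n H_{n-1}(s)$ in (\ref{hndev}).

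First, I would unpack the coordinates. Writing the $k$th entry (0-indexed) of $\vec{\beta}_{n,N}$ as $\beta_{n,k}$, the representations (\ref{Hodd}) become
\be
H_{2m-1}(s) = \sum_{k=0}^{M-1}\beta_{2m-1,k}\, s^{2k+1}, \qquad H_{2m-2}(s) = \sum_{k=0}^{M-1}\beta_{2m-2,k}\, s^{2k}.
\ee
Differentiating the odd-order expression termwise yields $H_{2m-1}'(s) = \sum_{k=0}^{M-1}(2k+1)\beta_{2m-1,k}\, s^{2k}$. Invoking (\ref{hndev}) to write the same derivative as $(2m-1)H_{2m-2}(s)$ and then matching coefficients of $s^{2k}$ produces the scalar identity
\be \label{scalid}
(2k+1)\,\beta_{2m-1,k} = (2m-1)\,\beta_{2m-2,k},
\ee
valid for all $1 \le m \le M$ and $0 \le k \le M-1$.

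Finally, I would read (\ref{scalid}) as a matrix equation. Since $\vec{D} = \diag(1, 3, \ldots, 2M-1)$, right-multiplication of $\vec{L}_b$ by $\vec{D}$ scales its $k$th column by $2k+1$, so the $(m,k)$ entry of $\vec{L}_b \vec{D}$ equals $(2k+1)\beta_{2m-1,k}$; left-multiplication of $\vec{L}_a$ by $\vec{D}$ scales its $m$th row by $2m-1$, giving $(2m-1)\beta_{2m-2,k}$ for the same entry of $\vec{D} \vec{L}_a$. These two entries agree by (\ref{scalid}) for every $(m,k)$, which is precisely (\ref{Lab}). The only point demanding care is the index bookkeeping between the 1-indexed row labels of $\vec{L}_a,\vec{L}_b$ and the 0-indexed exponents appearing in $\vec{z}$; there is no analytic obstacle beyond the single application of (\ref{hndev}).
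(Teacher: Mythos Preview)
Your proof is correct and follows essentially the same idea as the paper: both arguments differentiate the odd Hermite polynomials using (\ref{hndev}) and compare the result with the even Hermite expansions. The paper packages this as the single vector identity $(\vec{L}_b\, s\vec{z}^{\rm T})'=\vec{L}_b\vec{D}\vec{z}^{\rm T}=\vec{D}\vec{L}_a\vec{z}^{\rm T}$ and invokes linear independence of the monomials, while you carry out the equivalent entrywise coefficient comparison explicitly; the underlying content is identical.
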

\begin{proof} 
Using the identity (\ref{hndev}), it holds that the first-order derivative of $\vec{L}_b\vec{z}$ satisfies
\bea 
(\vec{L}_b s\vec{z})'=\vec{L}_b \vec{D}\vec{z}=\vec{D}\vec{L}_a\vec{z},\eea 
which proves (\ref{Lab}).
\end{proof} 

Now we prove Theorem~1. Denoting $\vec{D}_b^{\frac{1}{2}}$ as the square-root of $\vec{D}_b$. From Lemma~1, the Cholesky decomposition of $\vec{B}$ satisfies 
\bea \vec{L}=\sigma\vec{D}_\sigma \vec{L}_b^{-1}\vec{D}_b^{\frac{1}{2}}.\eea
Hence, it holds that
\bea \label{main} \vec{L}^{-1}\vec{D}\vec{A}\vec{D}\vec{L}^{-T}\!\!\!&\!\!\!=\!\!\!&\!\! \sigma^{-2}(\vec{D}_b^{-\frac{1}{2}}\vec{L}_b\vec{D}_\sigma ^{-1})\vec{D}\vec{A}\vec{D}(\vec{D}_b^{-\frac{1}{2}}\vec{L}_b\vec{D}_\sigma ^{-1})\rmt \notag \\
\!\!\!&\!\!\!=\!\!\!&\!\!\sigma^{-2}\vec{D}_b^{-\frac{1}{2}}(\vec{L}_b\vec{D})\vec{A}_0(\vec{D} \vec{L}\rmt_b) \vec{D}_b^{-\frac{1}{2}} \notag \\
\!\!\!&\!\!\!=\!\!\!&\!\! \sigma^{-2}\vec{D}_b^{-\frac{1}{2}}\vec{D}(\vec{L}_a\vec{A}_0 \vec{L}\rmt_a)\vec{D} \vec{D}_b^{-\frac{1}{2}}  \notag \\
\!\!\!&\!\!\!=\!\!\!&\!\! \sigma^{-2}\vec{D}_b^{-\frac{1}{2}}\vec{D}\vec{D}_a\vec{D}\vec{D}_b^{-\frac{1}{2}}  \notag \\
\!\!\!&\!\!\!=\!\!\!&\!\!\sigma^{-2}\vec{D},\eea
which completes the proof.

\section{Some Remarks}

\begin{remark}
For the Rayleigh quotient (\ref{newG}) with $s\!\sim\!\mathcal{N}(0, \sigma^2)$, it holds that $\vec{C}\!=\!\sigma^{-2}\vec{D}\!=\!\sigma^{-2}\diag(1, 3, ...,N-1)$, and the maximal receiving gain is $G\!=\!\sigma^{-2}N$ which increases with the order of $f(s)$. Further, the optimal NL function is identical to the $N$th Hermite polynomial. 
\end{remark}

\begin{remark}
From Lemma~1, by noting that $\vec{L}_a$ and $\vec{L}_b$ are triangular matrices with all diagonal elements being 1, it holds that
\bea \sigma^{-2M^2}\cdot\det\vec{B}&\!\!\!=\!\!\!&1\cdot 3!\cdot 5!\cdot ~\cdots~\cdot N! ,\\
\sigma^{-2M(M-1)} \cdot\det\vec{A}&\!\!\!=\!\!\!&1\cdot 2!\cdot 4!\cdot ~\cdots~ \cdot (N-1)! , \\
 \sigma^{-2M}\cdot \frac{\det\vec{B}}{\det\vec{A}}&\!\!\!=\!\!\!&\det\!\vec{D}=1\cdot 3\cdot 5\cdot ~\cdots~ \cdot N . \quad \eea
\end{remark}
 
\begin{remark}
It also holds that
\bea \label{rec} \vec{A}\vec{D}+\vec{D}\vec{A}=\vec{A}+\sigma^{-2}\vec{B},\eea
or equivalently,
\bea  \sigma^{-2}\vec{B}= \vec{D}\vec{A}\vec{D}-(\vec{D}-\vec{I})\vec{A}(\vec{D}-\vec{I}). \eea
\end{remark}
\begin{proof}
Note that when $s\!\sim\!\mathcal{N}(0, \sigma^2)$, it holds that 
\bea p'(s)\!=\!-\sigma^{-2}s p(s),\eea
From the definitions of (\ref{fs}), (\ref{A}) and (\ref{B}), it then holds that
\bea {\label{recfor}} \vec{a}\vec{D}\vec{A}\vec{D}\vec{a}\rmt&\!\!\!=\!\!\!&\int_{-\infty}^{\infty}(f'(s))^2 p(s)\mathrm{d}s \notag \\ &\!\!\!=\!\!\!& \int_{-\infty}^{\infty}f'(s) p(s)\mathrm{d}f(s)    \notag \\ &\!\!\!=\!\!\!&f(s)f'(s) p(s)\bigg\vert_{-\infty}^{\infty}-  \int_{-\infty}^{\infty}f(s)(f''(s)p(s)+f'(s)p'(s))\mathrm{d}s \notag \\
&\!\!\!=\!\!\!& \int_{-\infty}^{\infty}(-f(s)f''(s)+\sigma^{-2} sf(s)f'(s))p(s))\mathrm{d}s \notag \\
&\!\!\!=\!\!\!& -\vec{a}\vec{A}(\vec{D}-\vec{I})\vec{D}\vec{a}\rmt+\sigma^{-2}\vec{a}\vec{B}\vec{D}\vec{a}\rmt, \eea
where we have used the facts that 
\bea sf''(s)\!=\!\vec{a}(\vec{D}-\vec{I})\vec{D}\vec{z}\rmt,\eea
and $f(\infty)f'(\infty)p(\infty)\!=\!f(-\infty)f'(-\infty) p(-\infty)\!=\!0$ when $f(s)$ is polynomial. This implies
\bea
\vec{D}\vec{A}\vec{D}=-\vec{A}(\vec{D}-\vec{I})\vec{D}+\sigma^{-2}\vec{B}\vec{D},\eea
which proves (\ref{rec}).
 \end{proof}

\bibliographystyle{IEEEtran}

\end{document}